\pgfplotsset{compat=newest}
\theoremstyle{plain}
\newtheorem{theorem}{Theorem}[section]
\newtheorem{proposition}[theorem]{Proposition}
\theoremstyle{definition}
\newtheorem{definition}[theorem]{Definition}
\newcommand*{\cH}{\mathcal{H}}
\newcommand*{\cN}{\mathcal{N}}
\newcommand*{\cR}{\mathcal{R}}
\newcommand*{\RR}{\mathbb{R}}
\newcommand*{\CC}{\mathbb{C}}
\newcommand*{\id}{I}
\newcommand*{\tr}{\mathrm{tr}}
\newcommand*{\ket}[1]{| #1 \rangle}
\newcommand*{\spr}[2]{\langle #1 | #2 \rangle}
\newcommand{\ketbra}[2]{|#1\rangle\!\langle #2|}
\newcommand{\sq}{E_{\mathrm{sq}}}
\definecolor{mycolor1}{rgb}{0.00000,0.44700,0.74100}%
\definecolor{mycolor2}{rgb}{0.85000,0.32500,0.09800}%
\definecolor{mycolor3}{rgb}{0.92900,0.69400,0.12500}%
\definecolor{mycolor4}{rgb}{0.49400,0.18400,0.55600}%
\newcommand{\bP}{\mathbf{P}}
\newcommand{\bip}{\boldsymbol{p}}
\newcommand{\bY}{{\bf Y}}
\newcommand{\bZ}{{\bf Z}}
\begin{document}

\title{{\LARGE Semidefinite programming lower bounds \\
on the squashed entanglement}}

\author[1]{Hamza Fawzi}
\author[2]{Omar Fawzi}

\affil[1]{\small{DAMTP, University of Cambridge, United Kingdom}}
\affil[2]{\small{Univ Lyon, ENS Lyon, UCBL, CNRS, Inria, LIP, F-69342, Lyon Cedex 07, France}}

\maketitle


\begin{abstract}
The squashed entanglement is a widely used entanglement measure that has many desirable properties. However, as it is based on an optimization over extensions of arbitrary dimension, one drawback of this measure is the lack of good algorithms to compute it. Here, we introduce a hierarchy of semidefinite programming lower bounds on the squashed entanglement. 
\end{abstract}


\section{Introduction}
 
The squashed entanglement is an entanglement measure introduced by Christandl and Winter~\cite{CW04}. It satisfies many desirable properties of an entanglement measure such as monotonicity under local operations and classical communication (LOCC), additivity under tensor products, a monogamy relation~\cite{KW04} and faithfulness~\cite{BCY11}. In fact, it is the only known entanglement measure satisfying these properties. These mathematical properties also have operational applications: the squashed entanglement provides an upper bound on the distallable entanglement and the distillable key~\cite{Christandl06,christandl2007unifying,CSW12,wilde2016squashed}. For a bipartite state $\rho_{AB}$ on $A \otimes B$, the squashed entanglement is defined as
\begin{align}
\label{eq:squashed_def}
\sq(A:B)_{\rho} = \frac{1}{2} \inf_{\rho_{ABE}} I(A:B|E)_{\rho} \ ,
\end{align}
where the infimum is taken over all finite-dimensional Hilbert spaces $E$ and all extensions $\rho_{ABE}$ of the state $\rho_{AB}$, i.e., $\tr_{E} \rho_{ABE} = \rho_{AB}$. The conditional mutual information is defined in equation~\eqref{eq:cmi} below.
Note that as shown in~\cite{shirokov2016squashed}, considering infinite dimensional separable Hilbert spaces $E$ leads to the same quantity. We also note that a similar quantity was defined in~\cite{tucci1999quantum,tucci2002entanglement}.

To define the conditional mutual information it is useful to start with the quantum relative entropy. For density operators $\rho, \sigma$ on a finite-dimensional Hilbert space $\cH$, the quantum relative entropy is defined by
\begin{align}
\label{eq:relative_entropy}
D(\rho \| \sigma) = 
\left\{\begin{array}{ll}
        \tr(\rho (\log \rho - \log \sigma)) & \text{if } \mathrm{supp}(\rho) \subseteq \mathrm{supp}(\sigma) \\
        + \infty & \text{otherwise.}
        \end{array}\right. 
\end{align}
For a density operator $\rho_{A E}$ on $A \otimes E$, we define the conditional von Neumann entropy $H(A|E)_{\rho} = -D(\rho_{AE} \| \id_{A} \otimes \rho_E)$ where $\rho_{E} = \tr_{A} \rho_{AE}$ and $\id_{A}$ denotes the identity operator on $A$. The conditional mutual information is then defined by
\begin{equation}
\label{eq:cmi} 
I(A:B|E)_{\rho} = H(A|E)_{\rho} - H(A|BE)_{\rho} \ .
\end{equation}
Note that by considering a purification $\rho_{ABDE}$ of the state $\rho_{ABE}$ and using the fact that $H(A|BE)_{\rho} = - H(A|D)_{\rho}$, we can write the squashed entanglement as follows:
\begin{align}
\label{eq:squashed_purification}
\sq(A:B)_{\rho} = \frac{1}{2} \inf_{\rho_{ABDE}} H(A|D)_{\rho} + H(A|E)_{\rho} \ ,
\end{align}
where the infimum is over arbitrary finite-dimensional Hilbert spaces $D$ and $E$ and extensions $\rho_{ABDE}$ of the state $\rho_{AB}$.

\section{Squashed entanglement and $f$-divergences}

To obtain our hierarchy of semidefinite programs, it is useful to generalize the squashed entanglement~\eqref{eq:squashed_def} to standard $f$-divergences (also known as quasi-relative entropies). Let $f : (0,\infty) \to \RR$ be a continuous function 
such that $\lim_{x \to \infty} \frac{f(x)}{x} = 0$ and let $f(0^+) := \lim_{x \to 0} f(x)$ (which could be infinite).
\begin{definition}[Standard $f$-divergences~\cite{Pet86,HM17}]
\label{def:standard_f_divergence}
Let $\rho, \sigma$ be positive semidefinite operators on $\cH$ with spectral decompositions $\rho = \sum_{j} \lambda_j P_j$ and $\sigma = \sum_{k} \mu_k Q_k$. We define
\begin{align}
\label{eq:def_standard_f_divergence}
D_{f}(\rho \| \sigma) := \sum_{j : \lambda_j > 0} \sum_{k : \mu_k > 0} \lambda_j f(\mu_k \lambda_j^{-1} ) \tr(P_j Q_k) + f(0^+) \tr \left( \rho (I-\sigma^0) \right) \ ,
\end{align}
where $\sigma^0$ denotes the projector onto the support of $\sigma$.\footnote{Note that in this definition, the roles of $\rho$ and $\sigma$ are sometimes inverted. }

If $f$ is operator convex, then $D_{f}$ is jointly convex and satisfies the data processing inequality, i.e., for any quantum channel $\cN$, we have $D_{f}(\cN(\rho) \| \cN(\sigma)) \leq D_{f}(\rho \| \sigma)$.
\end{definition}
Note that $D_{f}(\rho \| \sigma)$ is linear and monotone in $f$ for fixed positive semidefinite operators $\rho, \sigma$, i.e., $D_{f}(\rho \| \sigma) \leq D_{g}(\rho \| \sigma)$ if $f(x) \leq g(x)$ for all $x \in (0,\infty)$.
When $f(x) = -\log x$, we obtain the quantum relative entropy.  As done in~\cite{FSP17,BFF21}, we use an approximation of the function $-\log$ using rational functions based on a Gauss-Radau quadrature (see e.g.,~\cite[page 103]{davis1984methods}). In order to do that, we introduce the functions $f_t : (0,+\infty) \to \RR$ for $t \in [0,1]$ as
\begin{align}
\label{eq:def_ft}
f_t(x) := \frac{x-1}{t (x-1) + 1} \ .
\end{align}
It is simple to see that $-f_t$ is operator convex for all $t \in [0,1]$. In addition, the condition $\lim_{x \to \infty} \frac{-f_t(x)}{x} = 0$ is satisfied for all $t \in (0,1]$, but not when $t = 0$. For $t = 0$, we take the convention that $D_{-f_0}$ is still defined via~\eqref{eq:def_standard_f_divergence}, i.e., $D_{-f_0}(\rho \| \sigma) = 1 - \tr(\rho^0 \sigma)$ (which is closely related to the so-called min-relative entropy $D_{\min}(\rho \| \sigma) = - \log(\rho^0 \sigma)$), and not as in~\cite{HM17} which would simply be $1 - \tr(\sigma)$. 
With our convention $D_{-f_0}(\rho \| \sigma) = \lim_{t \to 0} D_{-f_t}(\rho \| \sigma)$. In summary, for all $t \in [0,1]$, $D_{-f_t}$ is jointly convex and is thus a valid divergence satisfying the data-processing inequality.

The following proposition shows that one can choose $t_1, \dots, t_m$ and positive weights $w_1, \dots, w_m$ such that taking the linear combination of these divergences gives a good approximation of the quantum relative entropy.
\begin{proposition}[Gauss-Radau quadrature]
For any positive integer $m \geq 1$, there exists a choice of nodes $t_1, \dots, t_m \in (0,1]$ and weights $w_1, \dots, w_m > 0$ such that for all $x \in (0, + \infty)$
\begin{align}
\label{eq:log_rational}
0 \leq \left(-\frac{1}{\ln 2} r_m(x) \right) - \left( - \log x \right) \leq \frac{1}{m^2 \ln 2}\left(x + \frac{1}{x} - 2 \right)  \quad \text{ where } \quad r_m(x) := \sum_{i=1}^m w_i f_{t_i}(x) \ .
\end{align}
\end{proposition}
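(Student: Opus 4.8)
Multiplying through by $\ln 2$, the claim is equivalent to $0 \le \ln x - r_m(x) \le \frac{1}{m^2}\big(x + \tfrac1x - 2\big)$ for all $x>0$, so I will work with the natural logarithm. The starting point is the integral representation
\[
\ln x = \int_0^1 f_t(x)\,dt = \int_0^1 \frac{x-1}{t(x-1)+1}\,dt ,
\]
which holds because the integrand equals $\frac{d}{dt}\ln\big(t(x-1)+1\big)$ and $t(x-1)+1>0$ for $x>0$, $t\in[0,1]$. The key observation is then that $r_m(x)=\sum_{i=1}^m w_i f_{t_i}(x)$ is exactly the $m$-point Gauss--Radau rule for $\int_0^1(\cdot)\,dt$ with the node fixed at the right endpoint $t=1$. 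Choosing this rule supplies all the claimed data at once: the free nodes are the zeros of the monic degree-$(m-1)$ polynomial $\hat\pi_{m-1}$ orthogonal with respect to the weight $(1-t)$ on $[0,1]$, so all $t_i$ lie in $(0,1]$ (one at $t=1$, the remaining $m-1$ in the open interval); the Gauss--Radau weights are strictly positive; and the rule integrates polynomials of degree $\le 2m-2$ exactly. This settles existence, and the exactness is what I use throughout.

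For the lower bound $0\le\ln x-r_m(x)$, I would invoke the sign-definiteness of the Gauss--Radau remainder. With $g(t)=f_t(x)$ one computes $g^{(k)}(t)=(-1)^k k!\,(x-1)^{k+1}\big(t(x-1)+1\big)^{-(k+1)}$, so $g^{(2m-1)}$ has constant sign on $[0,1]$; combined with the Gauss--Radau error formula of \cite[p.~103]{davis1984methods}, whose error constant $\int_0^1(t-1)\,p(t)^2\,dt$ is also of fixed sign, the two signs multiply to give $\ln x-r_m(x)\ge 0$ for every $x>0$. The same conclusion reappears for free from the representation below, so I would likely present only the latter.

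The upper bound is the crux, and I would obtain it by passing to a Cauchy kernel. Substituting $t=(1+u)/2$ gives $\ln x=\int_{-1}^1\frac{\beta}{1+\beta u}\,du$ with $\beta=\frac{x-1}{x+1}\in(-1,1)$; setting $z=-1/\beta=-\frac{x+1}{x-1}$ (so $|z|>1$) turns the quadrature error, up to sign, into the Gauss--Radau error for $\frac{1}{z-u}$ on $[-1,1]$ with Legendre weight. The standard identity for such errors yields the closed form
\[
\ln x - r_m(x) = \frac{1}{(z-1)\,\hat\pi_{m-1}(z)^2}\int_{-1}^1 \frac{\hat\pi_{m-1}(u)^2\,(1-u)}{z-u}\,du ,
\]
where $\hat\pi_{m-1}$ is the Jacobi-type orthogonal polynomial above (re-proving $\ln x-r_m(x)\ge0$ by inspecting the sign of the integrand). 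To extract the upper bound I would bound $\frac{1}{|z-u|}$ from above — crudely by $\frac{1}{|z|-1}$, and more sharply via $0\le\frac{1-u}{z-u}\le1$ in the delicate regime $x\to0$ — pull out the explicit factors $z-1$ and $|z|-1$, and reduce the whole estimate to explicit bounds on the $L^2$ norms of $\hat\pi_{m-1}$ and on its endpoint values $\hat\pi_{m-1}(\pm1)$, using that $|\hat\pi_{m-1}(z)|$ is increasing for $z$ outside $[-1,1]$.

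The main obstacle is securing the clean constant $1/m^2$ \emph{uniformly} in $x$. The naive error formula with a mean-value point $\xi$ is useless here, since $\xi$ depends on $x$ and the factor $\big((x-1)/(\xi(x-1)+1)\big)^{2m}$ is unbounded; this is precisely why I route the argument through the exact representation above. The binding regime is $x\to0$ (equivalently $z\to-1$): there the fixed node $t=1$ carries weight exactly $1/m^2$ for Gauss--Radau on $[0,1]$, so $\ln x-r_m(x)\sim\frac{1}{m^2x}$, which matches the right-hand side $\frac1{m^2}(x+\frac1x-2)\sim\frac{1}{m^2x}$ (note moreover that $x+\frac1x-2=\int_0^1 f_t(x)^2\,dt$, which explains its shape); the opposite regime $x\to\infty$ is harmless since there the left-hand side grows only like $\ln x$. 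Carrying the Jacobi-polynomial estimate through the endpoint limit and checking it is not violated for intermediate $z$ is the step that requires genuine care, and the two endpoints $\pm1$ must be treated separately because the weight $(1-u)$ is not symmetric.
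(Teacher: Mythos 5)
Your construction and your lower-bound argument coincide with the paper's: fix the node $t_m=1$, view $r_m$ as the $m$-point Gauss--Radau rule for $\ln x=\int_0^1 f_t(x)\,dt$, and get $\ln x-r_m(x)\ge 0$ from the sign-definiteness of the Radau remainder. The genuine gap is in the upper bound, which is the actual content of the proposition: you never carry out the estimate, and the specific tools you commit to (sup-bounds on the Cauchy kernel, $L^2$ norms, endpoint values of $\hat\pi_{m-1}$, and monotonicity of $|\hat\pi_{m-1}|$ outside $[-1,1]$) lose exactly the factor $m^2$ at stake in the regime $x>1$. Indeed $\hat\pi_{m-1}\propto P^{(1,0)}_{m-1}$, for which $P^{(1,0)}_{m-1}(1)=m$ but $P^{(1,0)}_{m-1}(-1)=\pm 1$, while $\int_{-1}^1 \bigl(P^{(1,0)}_{m-1}(u)\bigr)^2\,du=2$. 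For $x\in(0,1)$, i.e.\ $z=\tfrac{x+1}{1-x}>1$, your plan does close, and exactly: bounding $\tfrac{1-u}{z-u}\le\tfrac{2}{z+1}$ and $\hat\pi_{m-1}(z)^2\ge\hat\pi_{m-1}(1)^2$ turns the representation into $\tfrac{4}{m^2(z^2-1)}=\tfrac{1}{m^2}\,\tfrac{(x-1)^2}{x}$, the factor $1/m^2$ coming from the ratio $\int\hat\pi_{m-1}^2\,du\,/\,\hat\pi_{m-1}(1)^2=2/m^2$. But for $x>1$ the pole sits at $z=-\tfrac{x+1}{x-1}\in(-\infty,-1)$, the same manipulations can only use the ratio $\int\hat\pi_{m-1}^2\,du\,/\,\hat\pi_{m-1}(-1)^2=2$, and they yield $\ln x-r_m(x)\le x+\tfrac1x-2$ with no $1/m^2$. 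You also misplace the danger: you call $x\to\infty$ ``harmless'', but that limit is $z\to-1^-$, precisely where your kernel bound $1/(|z|-1)$ blows up linearly while the endpoint value of $\hat\pi_{m-1}$ gives you nothing; the slack in the target inequality does not rescue the method, since the bound it produces exceeds the target for every $x>1$. (Separately, the closed-form Radau error representation for the Cauchy kernel is asserted rather than proved; that is a lesser issue.)

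The missing idea --- and the paper's actual route --- is to symmetrize before estimating. Since you already know $\ln(1/x)-r_m(1/x)\ge 0$, you may write $\ln x-r_m(x)\le -\bigl(r_m(x)+r_m(1/x)\bigr)$, and in this sum the logarithms cancel, leaving a purely rational function of $x$. The paper observes that its numerator has degree $2m$ and vanishes to order $2m$ at $x=1$ (by quadrature exactness), hence equals $c_0(x-1)^{2m}$ with $c_0=\prod_{i<m}\tfrac{t_i}{1-t_i}$; combining this with $w_m=1/m^2$ and the elementary inequality $(x+a)(x+1/a)\ge(x+1)^2$ applied to the denominator gives
\begin{align*}
-\bigl(r_m(x)+r_m(1/x)\bigr)\;\le\;\frac{1}{m^2}\,\frac{(x-1)^2}{x}\left(\frac{x-1}{x+1}\right)^{2m-2}\;\le\;\frac{1}{m^2}\left(x+\frac1x-2\right),
\end{align*}
uniformly in $x$, with no orthogonal-polynomial estimates at all. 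If you want to keep your Cauchy-kernel framework, you would have to apply it to this symmetrized error (the sum of the errors at $z$ and $-z$), not to $\ln x-r_m(x)$ itself; as written, your proof does not close.
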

\begin{proof}
As in~\cite{BFF21}, we fix the node $t_m = 1$ and the Gauss-Radau quadrature gives the existence of nodes $t_1, \dots, t_{m-1} \in (0,1)$ and weights $w_1, \dots, w_m > 0$ such that $\ln x - r_m(x) \geq 0$ for all $x \in (0, +\infty)$. 
This establishes the first inequality $0 \leq \left(-\frac{1}{\ln 2} r_m(x) \right) - \left( - \log x \right)$.

To prove the second inequality, we will establish the statement
\begin{align}
\label{eq:relation_r_barr}
 r_m(x) + r_m(1/x)  \geq \frac{-1}{m^2}\left(x + \frac{1}{x} - 2\right) \ .
\end{align}
Note using the fact that $\ln x - r_m(x) \geq 0$, the inequality~\eqref{eq:relation_r_barr} implies that 
\begin{align*}
r_m(x) - \ln(x) \geq r_m(x) - \ln x - (\ln(1/x) - r_m(1/x)) \geq \frac{-1}{m^2}\left(x + \frac{1}{x} - 2\right) \ ,
\end{align*}
which establishes the desired upper bound in~\eqref{eq:log_rational}.

We now focus on~\eqref{eq:relation_r_barr}. Using the Gauss-Radau quadrature property, one can show that we have $\ln(x) - r_m(x) = O( (x-1)^{2m})$ in the neighborhood of $x = 1$ and $w_m = \frac{1}{m^2}$ (see~\cite{BFF21} for more details). Recalling that
\begin{align*}
r_m(x) 
&= \sum_{i=1}^{m-1} w_i \frac{x-1}{t_i(x-1) + 1} + \frac{1}{m^2} \frac{x-1}{x} \\
&= \sum_{i=1}^{m-1} \frac{w_i}{1-t_i} \frac{x-1}{\frac{t_i}{1-t_i} x + 1} + \frac{1}{m^2} \frac{x-1}{x}
\end{align*}
and putting all the fractions to the same denominator, we can write $r_m(x) = \frac{-1}{m^2} \frac{F_m(x)}{x G_{m-1}(x)}$ where 
\begin{align*}
G_{m-1}(x) &= \prod_{i=1}^{m-1} \left( \frac{t_i}{1-t_i} x + 1 \right) \\
- F_m(x) &= \sum_{i=1}^{m-1} \frac{w_i}{1-t_i} (x - 1) \cdot x \cdot \prod_{j \in \{1, \dots, m-1\}, j \neq i} \left(\frac{t_j}{1-t_j} x + 1\right) + (x-1) \cdot \prod_{j \in \{1, \dots, m-1\}} \left(\frac{t_j}{1-t_j} x + 1\right) \ .
\end{align*}
Note that $\deg F_m = m$, $\deg G_{m-1} = m-1$ and they are normalized so that $F_m(0) = G_{m-1}(0) = 1$. 
Then we have:
\[
r_m(1/x) = -\frac{1}{m^2} \frac{xF_m(1/x)}{G_{m-1}(1/x)} = -\frac{1}{m^2} \frac{\hat{F}_m(x)}{\hat{G}_{m-1}(x)}
\]
where we let $\hat{F}_m(x) = x^m F_m(1/x)$ and $\hat{G}_{m-1} = x^{m-1} G_{m-1}(1/x)$. Note that $\hat{F}_m$ and $\hat{G}_{m-1}$ are both monic polynomials (i.e., the leading coefficient is $1$) since $F_m(0) = G_{m-1}(0) = 1$.

Summing we get
\begin{equation}
\label{eq:rm+rm1x}
r_m(x) + r_m(1/x) = -\frac{1}{m^2} \frac{F_m(x) \hat{G}_{m-1}(x) + x\hat{F}_m(x) G_{m-1}(x)}{x G_{m-1}(x) \hat{G}_{m-1}(x)}.
\end{equation}

We know that around $x=1$, $r_m(x) - \ln(x) = O((x-1)^{2m})$, which implies $r_m(1/x) + \ln(x) = O((x-1)^{2m})$ and so by summing we get $r_m(x) + r_m(1/x) = O((x-1)^{2m})$. Since the numerator of \eqref{eq:rm+rm1x} has degree $2m$ this means that it has to be a multiple of $(x-1)^{2m}$, i.e., we must have
\[
r_m(x) + r_m(1/x) = -\frac{1}{m^2} \frac{c_0 (x-1)^{2m}}{x G_{m-1}(x) \hat{G}_{m-1}(x)}
\]
where $c_0$ is the leading term of the numerator, i.e., it is the leading term of $G_{m-1}$ so $c_0 = \prod_{i=1}^{m-1} \frac{t_i}{1-t_i}$.
The main observation now is to note that $G_{m-1}(x) \hat{G}_{m-1}(x) \geq (x+1)^{2m-2}$. Indeed, we have
\[
\begin{aligned}
G_{m-1}(x) \hat{G}_{m-1}(x) &= \prod_{i=1}^{m-1} \left(\frac{t_i}{1-t_i} x + 1 \right) \left(\frac{t_i}{1-t_i} +  x\right)\\
&= c_0 \prod_{i=1}^{m-1} \left(x + \frac{1-t_i}{t_i} \right) \left(\frac{t_i}{1-t_i} +  x\right)\\
&\geq c_0 \prod_{i=1}^{m-1} (x+1)^2 = c_0 (x+1)^{2(m-1)}
\end{aligned}
\]
where we used the fact that for $x,a > 0$, $(x+a)(x+1/a) \geq (x+1)^2$.
This implies that
\begin{align*}
r_m(x) + r_m(1/x) 
&\geq -\frac{1}{m^2} \frac{(x-1)^2}{x} \left(\frac{x-1}{x+1}\right)^{2m-2} \\
&\geq -\frac{1}{m^2} \frac{(x-1)^2}{x} 
\end{align*}
as desired.
\end{proof}

In analogy with the expression~\eqref{eq:squashed_purification}, we can define a variant of the squashed entanglement for $f$-divergences.
\begin{definition}[$f$-squashed entanglement]
\label{def:f-squashed_ent}
 Let $\rho_{AB}$ be a bipartite state acting on $A \otimes B$. We define 
\begin{align*}
\sq^{(f)}(A:B)_{\rho} := \frac{1}{2} \inf_{\rho_{ABDE}} - D_{f}(\rho_{AD} \| \id_{A} \otimes \rho_D) - D_{f}( \rho_{AE} \| \id_{A} \otimes \rho_E) \ ,
\end{align*}
where the infimum is over all extensions $\rho_{ABDE}$ of $\rho_{AB}$. In particular, we define
\begin{align*}
\sq^{(m)}(A:B)_{\rho} := \sq^{(-\frac{r_m}{\ln 2})}(A:B)_{\rho} \quad \text{ where  $r_m$ is defined in~\eqref{eq:log_rational}} .
\end{align*}
Note that we have
\begin{align*}
\sq(A:B)_{\rho} = \sq^{(-\log)}(A:B)_{\rho} \geq \sq^{(m)}(A:B)_{\rho} \qquad \text{ for any } m \geq 1 \ .
\end{align*}
\end{definition}
We emphasize that in general the $f$-squashed entanglement could be negative and is not an entanglement measure. It will however be a useful tool in the rest of the paper. The main property of the $f$-divergences for $f = -\frac{r_m}{\ln 2}$ that we will use to construct semidefinite programs is a variational expression of $D_{f}$ that we introduce in the next section. Before that we establish basic properties of $\sq^{(m)}(A:B)_{\rho}$.

\begin{proposition}
\label{prop:sqm_sq}
For any state $\rho_{AB}$ and positive integer $m$, we have
\begin{align*}
\sq^{(m)}(A:B)_{\rho} \leq \sq(A:B)_{\rho} \leq  \sq^{(m)}(A:B)_{\rho} + \frac{2 d_{A} - 2}{m^2 \ln 2} \ .
\end{align*}
In addition, if $\rho_{AB}$ is pure, we have
\begin{align*}
\sq^{(m)}(A:B)_{\rho} &= \frac{1}{\ln 2} \tr\big(\rho_{A} r_m( \rho_{A}^{-1} ) \big) \ ,
\end{align*}
where $\rho_{A}^{-1}$ denotes the generalized inverse of $\rho_{A} = \tr_{B} \rho_{AB}$.
\end{proposition}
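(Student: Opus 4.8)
The plan is to treat the two parts separately, and within the first part to reduce everything to a single dimension-dependent estimate on a $\chi^2$-type $f$-divergence. The left inequality $\sq^{(m)}\le\sq$ is already recorded after Definition~\ref{def:f-squashed_ent}: writing $h=-\log$ and $g=-\tfrac{1}{\ln 2}r_m$, the first inequality of the quadrature proposition gives $h\le g$ pointwise, so by monotonicity of $D_f$ in $f$ one has $-D_g(\cdot)\le -D_h(\cdot)$ term by term, and taking infima over extensions yields $\sq^{(m)}\le\sq$. For the right inequality I would use the second quadrature inequality, which reads $g\le h+\phi$ with $\phi(x):=\frac{1}{m^2\ln 2}\big(x+\tfrac1x-2\big)$. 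Invoking monotonicity and linearity of $D_f$ in $f$ for fixed operators (both recorded just after Definition~\ref{def:standard_f_divergence}) gives $D_g-D_h\le D_\phi$. Hence for every extension $\rho_{ABDE}$ the $\sq$-objective $\tfrac12\big(H(A|D)+H(A|E)\big)$ exceeds the $\sq^{(m)}$-objective by at most $\tfrac12\big(D_\phi(\rho_{AD}\|\id_A\otimes\rho_D)+D_\phi(\rho_{AE}\|\id_A\otimes\rho_E)\big)$, so it suffices to prove the uniform bound $D_\phi(\rho_{AX}\|\id_A\otimes\rho_X)\le\frac{2d_A-2}{m^2\ln 2}$ for $X\in\{D,E\}$; taking infima then delivers the claim.

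The technical heart is this last estimate, which I expect to be the main obstacle. Using $x+\tfrac1x-2=x-2+\tfrac1x$ and the spectral formula, one computes $(m^2\ln 2)\,D_\phi(\rho_{AX}\|\id_A\otimes\rho_X)=R-2+Q$, where $R=\tr\big(\rho_{AX}^0(\id_A\otimes\rho_X)\big)$ and $Q=\tr\big(\rho_{AX}^2(\id_A\otimes\rho_X^{-1})\big)$, the $f(0^+)$ term vanishing because $\supp\rho_{AX}\subseteq A\otimes\supp\rho_X$. The bound $R\le d_A$ is immediate from $\rho_{AX}^0\preceq I$. The crux is $Q\le d_A$: with the eigendecomposition $\rho_{AX}=\sum_a p_a\proj{\psi_a}$ and $\omega_a:=\tr_A\proj{\psi_a}$, one has $\rho_X=\sum_a p_a\omega_a$ and $Q=\sum_a p_a^2\,\tr(\omega_a\rho_X^{-1})$. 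Since $p_a\omega_a\preceq\rho_X$, conjugating by $\rho_X^{-1/2}$ gives $\rho_X^{-1/2}\omega_a\rho_X^{-1/2}\preceq p_a^{-1} I$, an operator of rank at most $\rank\omega_a\le d_A$ (the Schmidt rank of $\ket{\psi_a}$ across the $A|X$ cut); hence $\tr(\omega_a\rho_X^{-1})\le d_A/p_a$ and $Q\le\sum_a p_a^2\cdot d_A/p_a=d_A$. Thus $R-2+Q\le 2d_A-2$, which is exactly the required estimate.

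Combining these, each extension satisfies (with the factor $\tfrac12$ and the two terms cancelling the factor $2$) that the $\sq$-objective exceeds the $\sq^{(m)}$-objective by at most $\frac{2d_A-2}{m^2\ln 2}$, and taking the infimum over extensions on both sides yields $\sq\le\sq^{(m)}+\frac{2d_A-2}{m^2\ln 2}$.

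For the pure case the point is that there is no genuine optimization. If $\rho_{AB}$ is pure, every extension must factor as $\rho_{ABDE}=\rho_{AB}\otimes\rho_{DE}$, so that $\rho_{AD}=\rho_A\otimes\rho_D$ and $\rho_{AE}=\rho_A\otimes\rho_E$. I would then invoke the tensor stability of the standard $f$-divergence, $D_f(\rho_A\otimes\tau\|\id_A\otimes\tau)=D_f(\rho_A\|\id_A)$, which follows directly from the spectral formula (the environment eigenprojectors are orthogonal and $\tr\tau=1$), the $f(0^+)$ term vanishing again since $\rho_A\otimes\tau$ is supported in $A\otimes\supp\tau$. Consequently every extension yields the same objective $-D_g(\rho_A\|\id_A)=\frac1{\ln 2}D_{r_m}(\rho_A\|\id_A)=\frac1{\ln 2}\tr\big(\rho_A\,r_m(\rho_A^{-1})\big)$, so the infimum equals this value, as claimed.
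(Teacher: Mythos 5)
Your proof is correct and follows essentially the same route as the paper: both use the Gauss--Radau bounds together with monotonicity and linearity of $D_f$ in $f$, reduce the error term to the $\chi^2$-type divergence $x + \tfrac1x - 2$ (which the paper writes as $f_0 - f_1$ with $f_0(x) = x-1$, $f_1(x) = (x-1)/x$, so your $R-1$ and $1-Q$ are exactly its $D_{f_0}$ and $D_{f_1}$), and handle the pure case via the product structure of extensions of a pure state. The only difference is that you supply a detailed eigendecomposition/rank argument for the key bound $\tr\big(\rho_{AX}^2 (\id_A \otimes \rho_X^{-1})\big) \le d_A$, which the paper asserts without proof.
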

\begin{proof}
We simply use Definition~\ref{def:standard_f_divergence} together with inequality~\eqref{eq:log_rational}. This gives
\begin{align*}
\sq(A:B)_{\rho} 
&= \frac{1}{2} \inf_{\rho_{ABDE}} - D(\rho_{AD} \| \id_{A} \otimes \rho_D) - D(\rho_{AE} \| \id_{A} \otimes \rho_E) \\
&\leq \frac{1}{2} \inf_{\rho_{ABDE}} - D_{g}(\rho_{AD} \| \id_{A} \otimes \rho_D) - D_{g}(\rho_{AE} \| \id_{A} \otimes \rho_E) \ ,
\end{align*}
where $g(x) = -\frac{1}{\ln 2} r_m(x) - \frac{1}{m^2 \ln 2} \left(x + \frac{1}{x} - 2 \right)$. Note that for the function $f_0(x) = x - 1$ and $f_1(x) = (x - 1)/x$, we have
\begin{align*}
D_{f_0}(\rho_{AD} \| \id_{A} \otimes \rho_D) = \tr(\rho_{AD}^0 \id_{A} \otimes \rho_D) - 1 \leq d_{A} - 1 \\
D_{f_1}(\rho_{AD} \| \id_{A} \otimes \rho_D) = 1 - \tr(\rho_{AD}^2 \id_{A} \otimes \rho_{D}^{-1})  \geq 1 - d_{A} \ .
\end{align*}
As a result,
\begin{align*}
- D_{g}(\rho_{AD} \| \id_{A} \otimes \rho_D) \leq - D_{-\frac{r_m}{\ln 2}}(\rho_{AD} \| \id_{A} \otimes \rho_D) + \frac{1}{m^2 \ln 2} 2 (d_A - 1) \ ,
\end{align*}
which proves the desired bound.

If $\rho_{AB}$ is pure, then any purification has the form $\rho_{ABDE} = \rho_{AB} \otimes \rho_{DE}$ and as a result,
\begin{align*}
D_{f}(\rho_{AD} \| \id_{A} \otimes \rho_{D}) &= D_{f}(\rho_{A} \| \id_{A}) \\
&= \tr \left( \rho f(\rho^{-1}) \right) \ ,
\end{align*}
using the defining formula~\eqref{eq:def_standard_f_divergence} and using the notation $\rho^{-1}$ for the generalized inverse of $\rho$.
\end{proof}


\section{Approximating the squashed entanglement with noncommutative polynomial optimization}

We now use the variational expression for $D_{-f_t}$ that was established in~\cite{BFF21}.
\begin{theorem}[\cite{BFF21}]
\label{thm:dft_var_expr}
Let $t \in [0,1]$ and recall that $f_t(x) = \frac{x-1}{t(x-1) + 1}$. Let $\rho$ and $\sigma$ be positive semidefinite operators on a finite-dimensional Hilbert space $\cH$. Then
\begin{equation}
	\label{eq:DubvN}
		D_{-f_t}(\rho\|\sigma) = 
		- \inf_{Z} \frac{1}{t}  \left\{\tr(\rho) + \tr(\rho (Z+Z^*)) + (1-t) \tr(\rho Z^* Z) + t \tr(\sigma Z Z^*) \right\} \ ,
\end{equation}
where the infimum is over bounded operators $Z$ on $\cH$.
\end{theorem}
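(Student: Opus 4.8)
The plan is to treat the right-hand side as an unconstrained \emph{quadratic} minimization in the operator variable $Z$, evaluate the infimum in closed form, and match it term by term against the spectral definition~\eqref{eq:def_standard_f_divergence} of $D_{-f_t}$. Writing $\langle A, B\rangle := \tr(A^* B)$ for the Hilbert--Schmidt inner product and using that $\rho$ is Hermitian, the linear part simplifies to $\tr(\rho(Z+Z^*)) = 2\Real\langle\rho, Z\rangle$, while cyclicity of the trace gives $\tr(\rho Z^* Z) = \langle Z, R_\rho Z\rangle$ and $\tr(\sigma ZZ^*) = \langle Z, L_\sigma Z\rangle$, where $R_\rho(Z) := Z\rho$ and $L_\sigma(Z) := \sigma Z$ denote right- and left-multiplication. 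Since left and right multiplication commute, the superoperator $\cL := (1-t)R_\rho + t L_\sigma$ is self-adjoint and positive semidefinite for every $t\in[0,1]$, and the bracketed objective becomes
\begin{align*}
F(Z) = \tr\rho + 2\Real\langle\rho, Z\rangle + \langle Z, \cL Z\rangle .
\end{align*}

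The second step is to minimize $F$ by completing the square. Provided $\rho$ lies in the range of $\cL$, the minimizer is $Z^\star = -\cL^{+}\rho$ (with $\cL^{+}$ the pseudoinverse) and $\inf_Z F(Z) = \tr\rho - \langle\rho, \cL^{+}\rho\rangle$, so the right-hand side of~\eqref{eq:DubvN} equals $-\frac1t\big(\tr\rho - \langle\rho, \cL^{+}\rho\rangle\big)$. When $\rho\notin\mathrm{range}(\cL)$ the quadratic is unbounded below along the offending null direction, so the infimum is $-\infty$; I expect this to correspond exactly to the case $D_{-f_t}(\rho\|\sigma) = +\infty$ (support failure), keeping the identity valid.

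The third step is an explicit diagonalization. Writing $\rho = \sum_j\lambda_j\proj{f_j}$ and $\sigma = \sum_k\mu_k\proj{e_k}$ in orthonormal eigenbases, the rank-one operators $\ketbra{e_k}{f_j}$ form a joint eigenbasis of the commuting pair $(R_\rho, L_\sigma)$, with $\cL\,\ketbra{e_k}{f_j} = \big((1-t)\lambda_j + t\mu_k\big)\ketbra{e_k}{f_j}$. Expanding $\rho$ in this basis (its $(k,j)$ coefficient is $\lambda_j\spr{e_k}{f_j}$) yields
\begin{align*}
\tr\rho - \langle\rho, \cL^{+}\rho\rangle = \sum_{k,j}\Big(\lambda_j - \tfrac{\lambda_j^2}{(1-t)\lambda_j + t\mu_k}\Big)|\spr{e_k}{f_j}|^2 = \sum_{k,j}\frac{t\,\lambda_j(\mu_k - \lambda_j)}{(1-t)\lambda_j + t\mu_k}|\spr{e_k}{f_j}|^2 ,
\end{align*}
and after multiplying by $-\frac1t$ the summand becomes $\lambda_j\,\frac{\lambda_j-\mu_k}{(1-t)\lambda_j + t\mu_k}|\spr{e_k}{f_j}|^2 = \lambda_j(-f_t)(\mu_k/\lambda_j)\,|\spr{e_k}{f_j}|^2$. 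The terms with $\mu_k>0$ reproduce the double sum in~\eqref{eq:def_standard_f_divergence} (with $\tr(P_jQ_k)$ replaced by $|\spr{e_k}{f_j}|^2$), while the terms with $\mu_k=0$ collapse, using $\sum_{k:\mu_k=0}\proj{e_k} = I-\sigma^0$ and $(-f_t)(0^+) = \frac{1}{1-t}$, to exactly $(-f_t)(0^+)\tr(\rho(I-\sigma^0))$.

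Finally, the delicate points I expect to require the most care are the invertibility and support bookkeeping. For $t\in(0,1)$ one has $(1-t)\lambda_j + t\mu_k \geq (1-t)\lambda_j > 0$ whenever $\lambda_j>0$, so $\rho$ always lies in $\mathrm{range}(\cL)$ and the pseudoinverse computation is clean; at $t=1$ one must check that $D_{-f_1}(\rho\|\sigma) = +\infty$ precisely when $\supp\rho\not\subseteq\supp\sigma$, matching the $-\infty$ infimum; and at $t=0$ the factor $1/t$ must be read as the limit $t\to 0^+$, consistent with the convention fixed for $D_{-f_0}$ in the text. Verifying that the constant $(-f_t)(0^+)=\frac{1}{1-t}$ agrees with the basis-completion term is the one genuinely nonroutine check.
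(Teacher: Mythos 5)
Your proof is correct, and there is nothing in the paper to compare it against: the paper states this theorem as a citation to \cite{BFF21} and gives no proof, so your argument must be judged against that reference, where essentially the same route is taken — the objective is recognized as a convex quadratic in $Z$, minimized by completing the square, and evaluated using the commuting left/right multiplication (relative modular) structure. Your version of this is sound: $\cL=(1-t)R_\rho+tL_\sigma$ is self-adjoint and positive semidefinite on Hilbert--Schmidt space, the rank-one operators $\ketbra{e_k}{f_j}$ jointly diagonalize it, and the algebra $\lambda_j - \lambda_j^2/((1-t)\lambda_j+t\mu_k) = t\lambda_j(\mu_k-\lambda_j)/((1-t)\lambda_j+t\mu_k)$ reproduces $\lambda_j(-f_t)(\mu_k/\lambda_j)$ after multiplying by $-1/t$, with the $\mu_k=0$ terms summing to $\frac{1}{1-t}\tr(\rho(I-\sigma^0))=(-f_t)(0^+)\tr(\rho(I-\sigma^0))$ exactly as you claim. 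The edge cases you flag also check out as you anticipate: for $t\in(0,1)$ the kernel components of $\rho$ vanish automatically (since $(1-t)\lambda_j+t\mu_k=0$ forces $\lambda_j=0$); for $t=1$ one has $\rho\in\mathrm{range}(L_\sigma)$ iff $\supp\rho\subseteq\supp\sigma$, and when this fails both sides are $+\infty$ because $(-f_1)(0^+)=+\infty$ multiplies $\tr(\rho(I-\sigma^0))>0$; and at $t=0$ the formula must be read as a limit, which is consistent with the paper's stated convention $D_{-f_0}(\rho\|\sigma)=\lim_{t\to 0}D_{-f_t}(\rho\|\sigma)$.
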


We fix an orthonormal basis for $A$ that we denote $\{\ket{a}\}_{a \in [d_A]}$ and one for $B$ that we denote $\{\ket{b}\}_{b \in [d_B]}$. 
For a Hilbert space $\cH$, we say that the density operator $\rho_{AB\cH}$ acting on $A \otimes B \otimes \cH$ is an extension of $\rho_{AB}$ if $\tr_{\cH} \rho_{AB\cH} = \rho_{AB}$. We write $\rho_{AB\cH} = \sum_{a_1a_2b_1b_2} \ketbra{a_1}{a_2} \otimes \ketbra{b_1}{b_2} \otimes \rho[a_1b_1,a_2b_2]$ where $\rho[a_1b_1,a_2b_2]$ is a bounded linear operator on $\cH$. With this notation, an extension of $\rho_{AB}$ is defined by a family of operator $\rho[a_1b_1,a_2b_2]$ satisfying $\tr\rho[a_1b_1,a_2b_2] = \tr( \ketbra{a_1}{a_2} \otimes \ketbra{b_1}{b_2} \rho_{AB})$.
\begin{theorem}
\label{prop:main_lb_sq_ent}
Let $m \geq 1$ and let $t_1, \dots, t_m, w_1, \dots, w_m$ be the nodes and weights from Proposition~\ref{eq:log_rational}. We define the matrix-valued polynomial ${\bf P}^{(m)}$ in noncommuting variables ${\bf Z} = \{Z_i[a_1, a_2], Z_i[a_1, a_2]^*\}_{i, a_1, a_2}$ by
\begin{align}
{\bf P}^{(m)}({\bf Z}) := \left(\sum_{i=1}^m P^i_{a_1, a_2}({\bf Z}) \right)_{a_1, a_2 \in [d_A]}
\end{align}
where 
\begin{align}
P^i_{a_1, a_2}({\bf Z}) &= \frac{w_i}{t_i \ln 2} \Bigg( Z_i[a_1, a_2] + Z_i[a_2, a_1]^* + (1-t_i) \sum_{a_3} Z_i[a_3, a_1]^* Z_i[a_3, a_2] \notag \\
&\qquad + \delta_{a_1=a_2}\left(1 + t_i \sum_{a_3, a_4} Z_i[a_3, a_4] Z_i[a_3, a_4]^* \right) \Bigg) \label{eq:def-pi}  \ ,
\end{align}
where $\delta_{a_1 = a_2} = 0$ when $a_1 \neq a_2$ and $\delta_{a_1 = a_2} = 1$ otherwise. Note that when the variables $Z_i[a_1, a_2]$ are replaced with operators on $\cH$, then ${\bf P}^{(m)}({\bf Z})$ is a Hermitian operator of $A \otimes \cH$. 
Then for any density operator $\rho_{AB}$ on $A \otimes B$, we have
\begin{equation}
\label{eq:ncpoly_sq_m}
\sq^{(m)}(A:B)_{\rho} = \inf \tr\left(\rho_{A \cH} \frac{ {\bf P}^{(m)}( {\bf Z}) + {\bf P}^{(m)}( {\bf Y}) }{2} \right)
\end{equation}
where the infimum is over finite-dimensional Hilbert spaces $\cH$, extensions $\rho_{AB\cH}$ of the state $\rho_{AB}$ and two mutually commuting families of operators ${\bf Z}$ and ${\bf Y}$ on $\cH$.
\end{theorem}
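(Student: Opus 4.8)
The plan is to prove the two inequalities separately, after first recording one algebraic identity that drives both directions. Throughout write $f = -\frac{r_m}{\ln 2}$, so that by Definition~\ref{def:f-squashed_ent} we have $\sq^{(m)}(A:B)_\rho = \frac12\inf_{\rho_{ABDE}}\big(-D_f(\rho_{AD}\|\id_A\otimes\rho_D) - D_f(\rho_{AE}\|\id_A\otimes\rho_E)\big)$. \textbf{Key identity.} Since $D_f$ is linear in $f$ and $f = \frac{1}{\ln 2}\sum_{i=1}^m w_i(-f_{t_i})$, I would first write $-D_f(\rho\|\sigma) = \frac{1}{\ln 2}\sum_i w_i\big(-D_{-f_{t_i}}(\rho\|\sigma)\big)$ and insert the variational formula of Theorem~\ref{thm:dft_var_expr} into each summand. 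The heart of the argument is then the claim that, for any extension $\rho_{AB\cH}$ written in block form $\rho_{A\cH} = \sum_{a_1,a_2}\ketbra{a_1}{a_2}\otimes\rho_A[a_1,a_2]$ and any family ${\bf Z}$ with $Z_i = \sum_{a_1,a_2}\ketbra{a_1}{a_2}\otimes Z_i[a_1,a_2]$,
\[
\tr\big(\rho_{A\cH}{\bf P}^{(m)}({\bf Z})\big) = \frac{1}{\ln 2}\sum_{i=1}^m \frac{w_i}{t_i}\Big\{\tr(\rho_{A\cH}) + \tr(\rho_{A\cH}(Z_i + Z_i^*)) + (1-t_i)\tr(\rho_{A\cH}Z_i^*Z_i) + t_i\tr((\id_A\otimes\rho_\cH)Z_iZ_i^*)\Big\},
\]
where $\rho_\cH = \tr_A\rho_{A\cH} = \sum_a\rho_A[a,a]$. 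I would prove this by expanding the left-hand side and matching the four groups of terms of $P^i_{a_1,a_2}$ (the linear pieces $Z_i[a_1,a_2], Z_i[a_2,a_1]^*$, the quadratic piece $\sum_{a_3}Z_i[a_3,a_1]^*Z_i[a_3,a_2]$, and the two diagonal pieces) against the four traces above; each match is a one-line computation using $\tr_A(\ketbra{a_1'}{a_2'}\ketbra{a_1}{a_2}) = \delta_{a_2',a_1}\delta_{a_2,a_1'}$ and $\sum_a\rho_A[a,a] = \rho_\cH$. Infimizing over ${\bf Z}$ gives $\inf_{\bf Z}\tr(\rho_{A\cH}{\bf P}^{(m)}({\bf Z})) = -D_f(\rho_{A\cH}\|\id_A\otimes\rho_\cH)$.

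\textbf{The inequality $\mathrm{RHS}\leq\sq^{(m)}$} is the direct direction. Given any extension $\rho_{ABDE}$, I would take $\cH = D\otimes E$, $\rho_{AB\cH} = \rho_{ABDE}$, and choose the two families to act on complementary factors, ${\bf Z} = \{Z_i^D\otimes\id_E\}$ and ${\bf Y} = \{\id_D\otimes Z_i^E\}$, which are mutually commuting by construction. Using $\tr_E\rho_{ADE} = \rho_{AD}$, the key identity collapses to the variational expression for the $D$-system, so $\tr(\rho_{A\cH}{\bf P}^{(m)}({\bf Z}))$ equals the summand for $\rho_{AD}$ evaluated at $\{Z_i^D\}$, and likewise for ${\bf Y}$ and $\rho_{AE}$. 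Choosing $\{Z_i^D\},\{Z_i^E\}$ near-optimal in the variational formula makes the objective at most $\frac12(-D_f(\rho_{AD}\|\id_A\otimes\rho_D) - D_f(\rho_{AE}\|\id_A\otimes\rho_E)) + \eps$; taking the infimum over extensions yields $\mathrm{RHS}\leq\sq^{(m)}$.

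\textbf{The inequality $\sq^{(m)}\leq\mathrm{RHS}$} is where commutativity does real work, and I expect it to be the main obstacle. The naive bound $\tr(\rho_{A\cH}{\bf P}^{(m)}({\bf Z}))\geq -D_f(\rho_{A\cH}\|\id_A\otimes\rho_\cH)$ (and the same for ${\bf Y}$) is insufficient, since matching it to $\sq^{(m)}$ would require a single extension with $\rho_{AD} = \rho_{AE} = \rho_{A\cH}$, i.e.\ a symmetric extension of $\rho_{A\cH}$, which need not exist. Instead I would invoke the structure theorem for a pair of commuting finite-dimensional $*$-algebras: letting $\mathcal{A}_Z,\mathcal{A}_Y\subseteq B(\cH)$ be generated by ${\bf Z}$ and ${\bf Y}$, there is a decomposition $\cH\cong\bigoplus_k(\CC^{n_k}\otimes\CC^{m_k})$ in which $\mathcal{A}_Z$ acts on the left factors and $\mathcal{A}_Y$ on the right ones. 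Embedding $\cH$ isometrically into $D\otimes E$ with $D = \bigoplus_k\CC^{n_k}$, $E = \bigoplus_k\CC^{m_k}$ via the block-diagonal inclusion $V$, and setting $\rho_{ABDE} = (\id_{AB}\otimes V)\rho_{AB\cH}(\id_{AB}\otimes V)^*$, produces a genuine four-partite extension of $\rho_{AB}$. Because $V^*(Z_i^D\otimes\id_E)V = Z_i$ (and the same for products and for ${\bf Y}$ on $E$), every trace in the key identity is preserved, so $\tr(\rho_{A\cH}{\bf P}^{(m)}({\bf Z}))$ equals the variational summand for $\rho_{AD} = \tr_{BE}\rho_{ABDE}$ at $\{Z_i^D\}$; hence $-D_f(\rho_{AD}\|\id_A\otimes\rho_D)\leq\tr(\rho_{A\cH}{\bf P}^{(m)}({\bf Z}))$, and symmetrically for $E$. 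Averaging and taking the infimum over RHS-feasible triples gives $\sq^{(m)}\leq\mathrm{RHS}$. The delicate points to get right are that $Z_i^D\otimes\id_E$ preserves the image of $V$ and restricts to $Z_i$ there, and the handling of the $\tr((\id_A\otimes\rho_D)Z_iZ_i^*)$ term, for which one checks that $\rho_D = \tr_E(V\rho_\cH V^*)$ reproduces the $\rho_\cH$-trace on the invariant subspace.
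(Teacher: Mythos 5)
Your proposal is correct and takes essentially the same route as the paper's proof: the same insertion of the variational formula of Theorem~\ref{thm:dft_var_expr} with block expansion of the operators $Z_i = \sum_{a_1,a_2}\ketbra{a_1}{a_2}\otimes Z_i[a_1,a_2]$, the same tensor-product construction ${\bf Z} = \{Z_i^D\otimes\id_E\}$, ${\bf Y} = \{\id_D\otimes Z_i^E\}$ for the direction $\mathrm{RHS}\leq\sq^{(m)}$, and the same commuting-to-tensor-factor argument for the reverse direction. The only difference is that where the paper simply cites~\cite{scholz2008tsirelson} for the fact that two mutually commuting families on a finite-dimensional Hilbert space can be realized on distinct tensor factors, you sketch the proof of that fact via the structure (Artin--Wedderburn) decomposition $\cH\cong\bigoplus_k(\CC^{n_k}\otimes\CC^{m_k})$ and the block-diagonal isometry $V$, which is indeed the standard argument behind the cited result.
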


\begin{proof}
We start from the expression in Definition~\eqref{def:f-squashed_ent} with $f = -\frac{r_m}{\ln 2}$ and consider a fixed extension $\rho_{ABDE}$.

We have 
\begin{align*}
& D_{-\frac{r_m}{\ln 2}}(\rho_{AD} \| \id_{A} \otimes \rho_D) \\
&= \frac{1}{\ln 2} \sum_{i=1}^m w_i D_{-f_{t_i}}( \rho_{AD} \| \id_{A} \otimes \rho_D) \\
&= - \sum_{i=1}^m \frac{w_i}{t_i \ln 2} \inf_{Z_i} \tr(\rho_{AD}(I+Z_i+Z_i^*)) + \tr\left(\rho_{AD} \left(  (1-t_i) Z_i^*Z_i + t_i \id_{A} \otimes \tr_{A} Z_i Z_i^* \right) \right)  \ ,
\end{align*}
where $Z_i$ is a bounded operator on $A \otimes D$ and we used Theorem~\ref{thm:dft_var_expr}. We can write such operators in the form $Z_i = \sum_{a_1,a_2} \ketbra{a_1}{a_2} \otimes Z_i[a_1,a_2]$. With this notation, we have
\begin{align*}
Z_i^* &= \sum_{a_1,a_2} \ketbra{a_1}{a_2} \otimes Z_i[a_2,a_1]^* \\
Z_i^* Z_i &= \left(\sum_{a_1,a_2} \ketbra{a_1}{a_2} \otimes Z_i[a_2,a_1]^* \right) \left(\sum_{a_3,a_4} \ketbra{a_3}{a_4} \otimes Z_i[a_3,a_4]\right) = \sum_{a_1,a_4} \ketbra{a_1}{a_4} \otimes \sum_{a_2} Z_i[a_2, a_1]^* Z_i[a_2, a_4] \\
Z_i Z_i^* &= \sum_{a_1,a_4} \ketbra{a_1}{a_4} \otimes \sum_{a_2} Z_i[a_1, a_2] Z_i[a_4, a_2]^* \ .
\end{align*}
Writing $\rho_{AD} = \sum_{a_1,a_2} \ketbra{a_1}{a_2} \otimes \rho[a_1, a_2]$ we get
\begin{align*}
- D_{-\frac{r_m}{\ln 2}}(\rho_{AD} \| \id_{A} \otimes \rho_D)
&= \inf_{\bf Z} \sum_{a_1,a_2} \tr\left( \rho[a_1,a_2] \sum_{i = 1}^m P^i_{a_1,a_2}({\bf Z}) \right) \\
&= \inf_{\bf Z} \tr\left( \rho_{AD} {\bf P}^{(m)}({\bf Z}) \right) \ ,
\end{align*}
where the infimum is over all operators ${\bf Z} = \{Z_i[a_1,a_2], Z_i[a_1,a_2]^*\}_{i,a_1,a_2}$ acting on $D$ and the noncommutative polynomials $P^i_{a_1, a_2}({\bf Z})$ are defined by~\eqref{eq:def-pi}.

Proceeding in the same way for the term $-D_{-\frac{r_m}{\ln 2}}(\rho_{AE} \| \id_{A} \otimes \rho_{E})$, we can write for a given extension $\rho_{ABDE}$:
\begin{align*}
-D_{-\frac{r_m}{\ln 2}}(\rho_{AD} \| \id_{A} \otimes \rho_{D}) -D_{-\frac{r_m}{\ln 2}}(\rho_{AE} \| \id_{A} \otimes \rho_{E}) &= \inf_{{\bf Z}, {\bf Y}} \tr( \rho_{ADE} ({\bf P}^{(m)}({\bf Z}) \otimes \id_{E} + {\bf P}^{(m)}({\bf Y})\otimes \id_{D} )) \ ,
\end{align*} 
where ${\bf Z}$ and ${\bf Y}$ are families of operators acting on $D$ and $E$ respectively. Taking the infimum over all extensions $\rho_{ABDE}$, we obtain 
\begin{align*}
\sq^{(m)}(A:B)_{\rho} = \inf \tr\left(\rho_{A D E} \frac{ {\bf P}^{(m)}( {\bf Z}) \otimes \id_{E} + {\bf P}^{(m)}( {\bf Y}) \otimes \id_{D} }{2} \right) \ ,
\end{align*} 
where the infimum is over all choices of finite-dimensional spaces $D$ and $E$ and operators ${\bf Z}$ and ${\bf Y}$ on $D$ and $E$ respectively. Note that the operators $Z_{i}[a_1,a_2] \otimes \id_{E}$ and $\id_{D} \otimes Y_{j}[a_3, a_4]$ act on the tensor product space $\cH = D \otimes E$ and commute. Thus, we clearly obtain a lower bound by taking the infimum over finite-dimensional Hilbert spaces $\cH$ and now both collections ${\bf Z}$ and ${\bf Y}$ act on $\cH$ and they mutually commute. This shows the inequality $\geq$ of equation~\eqref{eq:ncpoly_sq_m}. Equality then follows from the now well-known fact that two commuting families of operators on a finite-dimensional Hilbert space can be mapped isomorphically to operators acting on different tensor factors of a tensor product of two finite-dimensional Hilbert spaces, see~\cite{scholz2008tsirelson} for a proof.
%
\end{proof}

\subsection{Semidefinite relaxation} 

The expression in~\eqref{eq:ncpoly_sq_m} is a noncommutative polynomial optimization problem and thus one can use semidefinite programming hierarchies for it~\cite{PNA10}. We now explain how this can be done. Let $\cR$ be the ring of noncommutative polynomials in the $4md_A^2$ variables 
\[
\{\bY,\bZ\} = \{Y_i[a_1,a_2],Y_i[a_1,a_2]^*,Z_i[a_1,a_2],Z_i[a_1,a_2]^*, \;\; i \in [m], \;\; a_1,a_2 \in [d_A]\} \, ,
\]
modulo the commutation relations
\begin{equation}
\label{eq:comm}
\begin{aligned}
\Bigl[ Y_i[a_1,a_2] \, , \, Z_{i'}[a'_1,a'_2] \Bigr] = \Bigl[ Y_i[a_1,a_2] \, , \, Z_{i'}[a'_1,a'_2]^* \Bigr] = 0
\qquad \forall i,i' \in [m], \forall a_1,a_2,a'_1,a'_2 \in [d_A].
\end{aligned}
\end{equation}
For any integer $\nu$, we let $\cR^{\nu\times \nu}$ be the ring of $\nu \times \nu$ matrices where each entry is an element of $\cR$, i.e., $\cR^{\nu\times \nu}$ is the ring of $\nu\times \nu$ matrix-valued polynomials in the variables $\{\bY,\bZ\}$. In the notations established earlier, note that $\bP^{(m)}(\bZ)$ and $\bP^{(m)}(\bY)$ are both elements of $\cR^{d_A \times d_A}$.

We can express the optimization problem~\eqref{eq:ncpoly_sq_m} in the following form:
\begin{equation}
\label{eq:infL}
\begin{array}{cl}
\displaystyle\inf_{L:\cR^{d_A d_B \times d_A d_B} \to \CC} & \frac{1}{2} L\left( \bP^{(m)}(\bZ) \otimes I_B + \bP^{(m)}(\bY) \otimes I_B \right)\\
\text{s.t.} & L(\ketbra{a_1b_1}{a_2b_2} \otimes 1) = \spr{a_1b_1}{\rho_{AB}|a_2b_2}\;\; \forall a_1,a_2 \in [d_A], b_1,b_2 \in [d_B]
\end{array}
\end{equation}
where the optimization is over all linear forms $L:\cR^{d_A d_B \times d_A d_B} \to \CC$ that can be written as 
\begin{equation}
\label{eq:validL}
\begin{aligned}
L(\bP) &= \tr{ \; \sigma_{AB\cH} \; \bP(\underline{\bZ},\underline{\bY})} 
\end{aligned}
 \qquad \forall \bP \in \cR^{d_A d_B \times d_A d_B}
\end{equation}
where $\cH$ is a finite-dimensional Hilbert space, $\sigma_{AB\cH}$ is any state on $AB\cH$, and $\underline{\bZ},\underline{\bY}$ are operators on $\cH$.
The linear constraints on $L$ in \eqref{eq:infL} ensure that the state $\sigma_{AB\cH}$ is an extension of $\rho_{AB}$ (indeed, the notation $\ketbra{a_1b_1}{a_2b_2} \otimes 1$ is the element of $\cR^{d_A d_B \times d_A d_B}$ that has an identity in position $(a_1b_1,a_2b_2)$ and zeros elsewhere).

It is hard in general to require that $L$ has the form \eqref{eq:validL}. The $k$'th semidefinite relaxation of \eqref{eq:infL} is obtained by imposing instead the necessary condition
\begin{equation}
\label{eq:sosconstraint}
L(\bip \bip^*) \geq 0 \;\; \forall \bip \in \cR_k^{d_A d_B \times 1}
\end{equation}
where $\cR_k$ is the space of degree $k$ polynomials in $\cR$. The constraint \eqref{eq:sosconstraint} can be encoded as a positive semidefinite constraint on a matrix of size $\dim \cR_k^{d_A d_B}$. 

\subsection{Numerical implementation}



Figure \ref{fig:werner} shows the result of using the first level ($k=1$) of the semidefinite relaxation on the $2\times 2$ and $3\times 3$ Werner states. Recall that Werner states~\cite{werner1989quantum} in dimension $d = d_A = d_B$ are defined as $\rho_{AB} = p \frac{\Pi_{\mathrm{sym}}}{\tr(\Pi_{\mathrm{sym}})} + (1-p) \frac{\Pi_{\mathrm{asym}}}{\tr(\Pi_{\mathrm{asym}})} $, where $\Pi_{\mathrm{sym}}$ and $\Pi_{\mathrm{asym}}$ are the projectors onto the symmetric and antisymmetric subspace of $\CC^{d} \otimes \CC^{d}$ respectively. Note that when $p \geq \frac{1}{2}$, $\rho_{AB}$ is separable, so in particular the squashed entanglement is $0$.

\pgfplotstableread{
p    ub4		ub5			ub10       lb8
0.0  1.000		1.000		1.0000   0.9784
0.1  0.6419		0.6419		0.6419   0.6300
0.2  0.3979		0.3976		0.3979   0.3795
0.3  0.2127		0.2102		0.2031   0.1891
0.4  0.0810		0.0753  	0.0627   0.0472
0.5  0.0000		0.0000		0.0000  -0.0144
}\wernertbltwo

\pgfplotstableread{
p    lb10	ub4			ub5
0.0  0.7023	0.7924		0.7924
0.1  0.4521	0.6293		0.6009
0.2  0.2724	0.4972		0.4482
0.3  0.1261	0.3286		0.3047
0.4  0.0268	0.2019		0.1876
0.5  -0.02876  0.0968		0.0889
}\wernertblthree

%
\begin{figure}[ht]
\centering
\begin{tikzpicture}
\begin{axis}[width=2in,height=1.75in,at={(1.011in,0.642in)},scale only axis,
xmin=0,xmax=0.5,ymin=0,ymax=1.00,ylabel style={rotate=-90},
xlabel={$p$},ylabel={$\sq$},
legend style={legend cell align=left, align=left, draw=white!15!black},
legend entries={Upper bound $d_D=d_E=4$,Upper bound $d_D=d_E=5$,Lower bound $m=10$},
legend to name=mylegend,
axis background/.style={fill=white},title={$d_A=d_B=2$}]
\addplot [smooth,color=mycolor1, line width=1.0pt, dashed] table[x=p,y=ub4]{\wernertbltwo};
\addplot [smooth,color=mycolor2, line width=1.0pt, dashed] table[x=p,y=ub5]{\wernertbltwo};
\addplot [smooth,color=mycolor4, line width=1.0pt] table[x=p,y=lb8]{\wernertbltwo};
\end{axis}
\end{tikzpicture}%
\qquad
\begin{tikzpicture}
\begin{axis}[width=2in,height=1.75in,at={(1.011in,0.642in)},scale only axis,
xmin=0,xmax=0.5,ymin=0,ymax=1.00,ylabel style={rotate=-90},
xlabel={$p$},ylabel={$\sq$},legend style={legend cell align=left, align=left, draw=white!15!black},
axis background/.style={fill=white},title={$d_A=d_B=3$}]
\addplot [smooth,color=mycolor1, line width=1.0pt,dashed] table[x=p,y=ub4]{\wernertblthree};
\addplot [smooth,color=mycolor2, line width=1.0pt,dashed] table[x=p,y=ub5]{\wernertblthree};
\addplot [smooth,color=mycolor4, line width=1.0pt] table[x=p,y=lb10]{\wernertblthree};
\end{axis}
\end{tikzpicture}\\
\ref*{mylegend}
\caption{
Lower bound obtained using the first level ($k=1$) of the semidefinite programming relaxation. The upper bound is obtained by a heuristic optimization for computing the infimum in~\eqref{eq:squashed_purification}. For $d = 2$, the upper and lower bounds approximately match. However, for $d=3$, there is still a large gap. We note that for the antisymmetric state (i.e., $p = 0$), explicit upper bounds on the squashed entanglement are given in~\cite{CSW12}: it gives $\frac{1}{2} \log 3 \approx 0.792$ for $d = 3$.}
\label{fig:werner}
\end{figure}
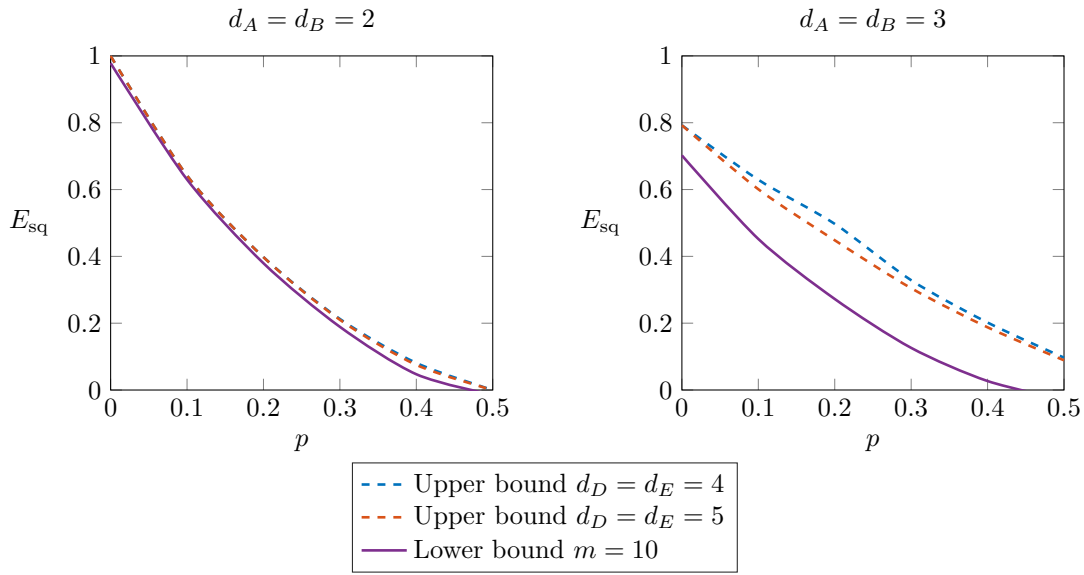

\section{Discussion}

The semidefinite programs (SDPs) that we constructed for the squashed entanglement raise a number of questions. First, does this family of SDPs converge to the squashed entanglement? We know from Proposition~\ref{prop:sqm_sq} that as $m \to \infty$, we have $\sq^{(m)}(A : B)_{\rho}$ approaches $\sq(A : B)_{\rho}$, but what remains unclear is whether the SDP relaxations for $\sq^{(m)}(A : B)_{\rho}$ converge for a fixed $m$. The difficulty comes from the fact that the general convergence results in~\cite{PNA10} give an infinite-dimensional Hilbert space but we could only prove~\eqref{eq:ncpoly_sq_m} for finite-dimensional Hilbert spaces. In other words, similar to Tsirelson's problem, the challenge here comes from comparing the commuting model to the tensor product model for the specific problem defined in~\eqref{eq:ncpoly_sq_m}. Another question is to use the specific structure of the polynomials in Theorem~\ref{prop:main_lb_sq_ent} in order to obtain finite convergence bounds for this family of SDPs, where one can hope to use techniques that were used to establish noncommutative Positivestellensatz~\cite{helton2002positive,helton2012convex}. Such a convergence could also lead to bounds on the dimension of the extension system sufficient to approximate the squashed entanglement. Note that the NP-hardness of the squashed entanglement~\cite{huang2014computing} rules out a convergence that is too fast. For example, if $\sq^{(m)}(A : B)_{\rho}$ was always equal to the SDP relaxation at level $k = 1$, we could take $m$ polynomial in the dimension and get a polyonmial-time algorithm to approximate the squashed entanglement with additive error that is inverse polynomial in the dimension, which is ruled out by~\cite{huang2014computing}. Finally, a limitation of the method we propose is that the size of the SDPs grow quickly in practice and this makes it difficult to run except in very small dimension. It would be interesting to find more efficient methods.

\section*{Acknowledgements}
We thank Peter Brown, Tim Netzer, Mizanur Rahaman and William Slofstra for useful discussions. OF acknowledges funding by the European Research Council (ERC Grant Agreement No. 851716).

\bibliographystyle{alphaabbrv}
\bibliography{big}

\end{document}